\newtheorem{definition}{Definition}
\newtheorem{theorem}{Theorem}
\newtheorem{lemma}{Lemma}
\newcommand\submittedtext{%
  \footnotesize This work has been submitted to the IEEE for possible publication. Copyright may be transferred without notice, after which this version may no longer be accessible.}
\newcommand\submittednotice{%
\begin{tikzpicture}[remember picture,overlay]
\node[anchor=south,yshift=10pt] at (current page.south) {\fbox{\parbox{\dimexpr0.65\textwidth-\fboxsep-\fboxrule\relax}{\submittedtext}}};
\end{tikzpicture}%
}
\begin{document}
\title{Deep Unfolding of Fixed-Point Based Algorithm for Weighted Sum Rate Maximization} 


\author{%
  \IEEEauthorblockN{Jan Christian Hauffen}
  \IEEEauthorblockA{Communications and Information Theory\\
                    Technical University of Berlin\\
                    Berlin\\
                    Email: j.hauffen@tu-berlin.de}
  \and
  \IEEEauthorblockN{Chee Wei Tan}
  \IEEEauthorblockA{Nanyang Technological University\\ 
                    Singapore\\
                    Email: cheewei.tan@ntu.edu.sg}
  \and
  \IEEEauthorblockN{Giuseppe Caire}
  \IEEEauthorblockA{Communications and Information Theory\\
                    Technical University of Berlin\\
                    Berlin\\
                    Email: caire@tu-berlin.de}
}

\maketitle

\submittednotice
\begin{abstract}
In this paper, we propose a novel approach that harnesses the standard interference function, specifically tailored to address the unique challenges of non-convex optimization in wireless networks. We begin by establishing theoretical guarantees for our method under the assumption that the interference function exhibits log-concavity. Building on this foundation, we develop a Primal-Dual Algorithm (PDA) to approximate the solution to the Weighted Sum Rate (WSR) maximization problem. To further enhance computational efficiency, we leverage the deep unfolding technique, significantly reducing the complexity of the proposed algorithm. Through numerical experiments, we demonstrate the competitiveness of our method compared to the state-of-the-art fractional programming benchmark, commonly referred to as FPLinQ.
\end{abstract}

\section{Introduction}
With the advent of Open Radio Access Network (OpenRAN) architectures for 6G, which aim to enable flexible, interoperable, and software-defined solutions, there is a pressing need for scalable and efficient algorithms that can address the complex interference management challenges in these emerging networks. In this paper, we propose a novel approach that integrates the standard interference function framework with difference-of-convex programming to address the challenging problem of maximizing the Weighted Sum Rate (WSR) in a Gaussian interference channel. This problem, which involves optimizing power control strategies, is inherently non-convex due to the complex interplay of interference in wireless networks. Our approach aims to overcome these challenges while providing efficient solutions suitable for next-generation wireless systems.

The standard interference function framework has been instrumental in optimizing wireless networks, providing a mathematical basis for modeling and managing interference. With the advent of 6G networks—featuring ultra-dense deployments, higher frequencies, and greater connectivity demands—scalable algorithms for interference mitigation are increasingly vital. By modeling interference as a function of transmit powers, spatial configurations, and transmitter-receiver distances, this framework captures the trade-offs between interference and signal quality. Coupling it with advanced techniques like difference-of-convex programming enables the design of scalable resource allocation strategies to address the complex interference challenges in 6G networks.

Advances in the past decade have significantly expanded the applicability and effectiveness of the interference function framework in wireless network optimization \cite{TWC2012}. One notable development is the incorporation of sophisticated interference models that capture the effects of various factors such as channel fading, multi-path propagation, and spatial correlation. These refined models enable more accurate prediction and control of interference dynamics, leading to improved network performance and spectral efficiency.

As the optimization of non-convex wireless networks becomes a central focus in research and deployment, the need for innovative techniques to address its inherent complexities has grown substantially. Traditional optimization methods often fall short in managing the intricacies of non-convexity and meeting the stringent performance demands of modern wireless systems. In this paper, we propose an optimization framework that leverages the standard interference function, tailored to the unique challenges of non-convex optimization in wireless networks. To enhance scalability and adaptability, we integrate deep learning into the design of optimization algorithms for 6G protocols. Deep learning facilitates data-driven modeling and prediction of interference dynamics in ultra-dense, heterogeneous network environments, where conventional methods are inadequate. By incorporating deep unfolding techniques, our approach seamlessly combines model-based optimization with data-driven learning, resulting in structured, interpretable, and scalable algorithms. Through rigorous theoretical analysis and empirical validation, we demonstrate that our framework delivers robust and efficient solutions to the high-dimensional and non-convex resource allocation challenges of next-generation 6G wireless networks.

Optimization algorithms can be unfolded into a finite number of iterations, treating these stages as layers of a deep neural network, with parameters trainable via stochastic gradient methods. This approach, known as deep unfolding \cite{gregor2010learning}, has recently gained attention in  wireless communications, as demonstrated by the algorithm-unfolding-based network proposed in \cite{li_graph-based_2022}.

Difference-of-convex programming (DCP) is a framework of optimizing non-convex functions, when the given objective can be expressed as the difference of two convex functions. Specifically, a simple algorithm is provided, where a global estimator is constructed and a series of so defined convex sub-problems are solved, this method is usually referred to as the difference-of-convex function algorithm (DCA). Furthermore, this optimization framework can be connected to a range of different opitmization frameworks, e.g. the Succesive Convex Approximation, the Majorization-Minimization \cite{sun2016majorization}, Expectation-Maximization, or the Concave–Convex procedure (CCCP) \cite{yuille2001concave, lanckriet2009convergence, boydccp1,boydccp2}. An overview of the connection to other optimization frameworks is given in \cite{le2018dc}. DCP was first mentioned in \cite{tao1986algorithms}. Moreover, in \cite{yao2023globally} the convergence properties to a global optima are investigated as well as a Bregman distance based framework for given convex-sub-problems is developed.

DCP has already been investigated in the optimization of wireless networks, e.g. \cite{vucic2010dc, al2011achieving}. In \cite{vucic2010dc} the approach was applied directly to the maximization of the WSR. Furthermore, the classical difference-of-convex algorithm was investigated and convergence could be proven if the respective interference function is concave. In \cite{al2011achieving} DCP was applied to find a global optimum of the problem of sum rate maximization with a total power constraint.

In this paper, we propose a general optimization approach for WSR maximization, designed to handle $\log$-concave interference functions. We demonstrate that an already known fixed-point algorithm can be generalized as a standard power control algorithm when the underlying interference function exhibits log-concavity. This result is then used to develop a Primal Dual Algorithm to approximate a solution of the WSR problem. The complexity of the Primal Dual Algorithm is then reduced by employing the deep unfolding approach. Numerical experiments show that the proposed learned algorithm is comparable to the state-of-the-art benchmark based on Fractional Programming, \cite{shen_fplinq_2017, shen_fractional_2018, shen_fractional_2018-1}. 

\noindent\textbf{Notation:} We denote matrices, respectively vectors with bold capital or bold small letters, i.e. $\mathbf{A, a} $. $\mathbb{E}[x]$ is the expected value of a random variable $x$.

\section{Mathematical Background and Problem Formulation}
\subsection{Standard Interference Function Framework}
In \cite{yates1995framework} Yates introduces a framework for power control and introduces a comprehensive framework for managing transmit power levels from mobile devices in order to optimize the overall performance of the cellular network. We adapt the definition of a standard interference function from there.
\begin{definition}
A function $\mathcal{I}\,:\,\mathbb{R}^K_+ \mapsto \mathbb{R}_+$ is called \textbf{Standard Interference Function} \cite{yates1995framework}, $\mathcal{I}\,:\,\mathbb{R}^K_+ \mapsto \mathbb{R}_+$, if
    \begin{align}
        \text{Positivity: }&\mathcal{I}({\bf p}) > 0, \, \, {\bf p} > 0\\
        \text{Scalability: }&\alpha \mathcal{I} ({\bf p}) > \mathcal{I}(\alpha{\bf p}), \, \, \alpha > 1 \\
        \text{Monotonicity: }& \mathcal{I}({\bf p}) \geq \mathcal{I}({\bf p}'), \, \,  {\bf p} \geq {\bf p}'
    \end{align}
    Moreover, let $\mathcal{I}\,:\,\mathbb{R}^K_+ \mapsto \mathbb{R}_+$ be a Standard Interference Function. We call $\mathcal{I}$ \textbf{feasible} if there exists some $\bf p\geq 0$ s.t. 
    \begin{align}
        \bf p \geq \mathcal{I}(\bf p).
    \end{align}
\end{definition}
Moreover it is known, if ${\bf \mathcal{I}}(\bf p)$ is standard and feasible, then the so called standard power control algorithm 
\begin{align*}
    {\bf p}^{(k+1)} ={\bf \mathcal{I}}({\bf p}^{(k)}) 
\end{align*}
converges to a fixed point, see Theorem 2 in \cite{yates1995framework}. 

\subsection{Difference-of-convex Programming}
In DCP one considers problems of the form
\begin{align}
    \min_{{\bf x}\in\Omega} f({\bf x}) - g({\bf x})\label{eq:DCP}
\end{align}
where $f,g$ are convex and smooth functions.

To approximate the solution of \eqref{eq:DCP} one can then iteratively solve the following convex-sub-problems, these are given by approximating $g(\cdot)$ at ${\bf x}^{(k)}$ with a first-order Taylor expansion,
\begin{align}
    {\bf x}^{(k + 1)}\in\arg\min_{{\bf x}\in\Omega} f({\bf x})-\left(g({\bf x}^{(k)}) + \nabla g({\bf x}^{(k)})^T({\bf x} - {\bf x}^{k})\right).\label{eq:orig_DCA}
\end{align}
Iterations \eqref{eq:orig_DCA} are usually referred to as the difference-of-convex algorithm (DCA).

\subsection{System Model and Problem Formulation}
In the following, a Gaussian interference channel with $K$ transmitter-receiver pairs is considered. The signal at the receiver of link i is given by 
\begin{align}
    Y_i(t) = \sum_{j=1}^K h_{ij}\Tilde{X}_j(t)+Z_i(t)\, , i=1,\dots,K \label{eq:K_user_channel}
\end{align} 
with power constraint $\mathbb{E}[|\Tilde{X}_i(t)|^2]\leq P_i $, where $\Tilde{X}_j(t)$ is the transmitted symbol of transmitter $j$, $Y_i(t)$ the received signal at receiver $i$ and $Z_i(t)\thicksim\mathcal{CN}\left(0,\sigma^2\right)$ for each discrete time point $t$.

Let ${\bf G}\in\mathbb{R}^{K\times K}$ be the system matrix containing the channel gain of transmitter $j$ to receiver $i$. The WSR problem formulation, for strict positive weights $w_i$, is usually stated as follows
\begin{align}\label{eq:sr}
\max_{\bf p}\sum_i^K & w_i\log\left(1 + \frac{\mathbf{G}_{ii}{\bf p}_i}{\mathcal{I}_i({\bf p})}\right).
\end{align}

For practical reasons in the great majority of works it is assumed that interference is treated as noise, e.g. \cite{shen_fplinq_2017, shen_fractional_2018, shen_fractional_2018-1, naderializadeh_itlinq_2014, wu_flashlinq_nodate,geng_optimality_2015, yi_optimality_2015, cui_spatial_2019, lee_graph_2020, shelim_geometric_2022, lee2018deep}, which corresponds to the affine linear interference function
\begin{align}
\mathcal{I}_i({\bf p}) = \sum_{j\neq i} {\bf G}_{ij}{\bf p}_j + \sigma_i,\label{eq:aff_int}
\end{align} where $\sigma_i$ is the additive white Gaussian noise (AWGN) variance.

As the following theory can be applied to a wider class of interference functions, the framework of DCP is applied to problem \eqref{eq:sr} for an arbitrary standard, feasible and $\log$-concave interference function.

\section{WSR-Maximization for $\log$-concave Interference Functions}
In the following we discuss the difference-of-convex functions framework to approximate a solution to \eqref{eq:sr} for an arbitrary and $\log$-concave interference function. Especially for a special case of \eqref{eq:sr} we derive a closed form fixed-point algorithm, guaranteed to converge. 
\vspace{-1.5mm}
\subsection{Special Case}\label{sec:special_case}
Let us examine an approximation to \eqref{eq:sr} where a special class can be solved using fixed-point algorithms in polynomial time. Consider the following problem formulation:
\begin{subequations}\label{eq:WS_log}
\begin{align}
    \max_\mathbf{p} &\sum_i^K w_i\log\left(\frac{\mathbf{G}_{ii}{\bf p}_i}{\mathcal{I}_i({\bf p})}\right) \\
    \text{s.t. }& 0 < {\bf p}_i \leq P_\text{max}\, \forall i=1,\dots, K.
\end{align}
\end{subequations}
Using the properties of the logarithm we have
\vspace{-0.5mm} 
\begin{align*}
\sum_i^K w_i\log\left(\frac{\mathbf{G}_{ii}{\bf p}_i}{\mathcal{I}_i({\bf p})}\right) & = \sum_i^K w_i\log (\mathbf{G}_{ii}{\bf p }_i) \\&- \sum_i^K w_i \log\left(\mathcal{I}_i({\bf p })\right).
\end{align*}
If $\mathcal{I}_i({\bf p })$ is $\log$-concave the original problem is a difference-of-convex problem. The corresponding DCA is given by 
\begin{align*}
    {\bf p}^{(k+1)}&\in\arg\max_{\bf p}\sum_i^K w_i\log (\mathbf{G}_{ii}{\bf p }_i)\\& - \sum_i^K w_i \log\left(\mathcal{I}_i({\bf p }^{(k)})\right) \\& -\nabla_{{\bf p}} \left(\sum_i^K w_i\log\left(\mathcal{I}_i({\bf p}^{(k)})\right)\right)^T\left({\bf p} - {\bf p}^{(k)}\right).
\end{align*}
The first order optimality conditions for these convex sub-problem are given as
\begin{subequations}\label{eq:grads}
    \begin{align}
    \nabla_{\bf p} &\left(\sum_i^K w_i\log \left(\mathbf{G}_{ii}{\bf p }^{(k+1)}_i\right)\right) \\ & = \nabla_{\bf p} \left(\sum_i^K w_i\log\left(\mathcal{I}_i({\bf p}^{(k)})\right)\right).
\end{align}
\end{subequations}

This structure can also be obtained by the CCCP, see \cite{yuille2001concave}, which can also be interpreted as an instance of the DC framework. Moreover, if $\mathcal{I}_i({\bf p })$ is $\log$-concave, then \eqref{eq:grads} is guaranteed to converge to a stable point of \eqref{eq:WS_log}.

By isolating ${\bf p}^{(k + 1)}$ in \eqref{eq:grads}, one can obtain closed update rule
\begin{align}
    \frac{w_i}{\mathbf{p}^{(k+1)}_i} &= \left(\nabla_{\bf p} \left(\sum_j^K w_j\log\left(\mathcal{I}_j({\bf p}^{(k)})\right)\right)\right)_i \nonumber\\
    {\bf p}^{(k+1)}_i &=\min\left\lbrace w_i \left[\sum_j^K\frac{w_j\nabla_{\bf p} \mathcal{I}_j({\bf p}^{(k)})_i}{\mathcal{I}_j({\bf p}^{(k)})}\right]^{-1}, P_\text{max} \right\rbrace.
    \label{eq:gen_alg}
\end{align}
In the following we show that \eqref{eq:gen_alg} converges to a fixed point by using the standard interference function framework by Yates \cite{yates1995framework}. Furthermore, it is shown that the right-hand side in \eqref{eq:gen_alg} is a standard interference function for an arbitrary standard and feasible interference function $\mathcal{I}(\cdot)$, if $\mathcal{I}(\cdot)$ is $\log$-concave and the gradient w.r.t. to $\bf p$ is scale invariant as a function in $\bf q$. 
\begin{theorem}\label{thm:main}
    Let $\mathcal{I}$ be a standard and feasible interference function. \eqref{eq:gen_alg} converges to a fixed-point if $\mathcal{I}_i(\cdot)$ is $\log$-concave for  all $i = 1,\dots, K$ and the gradient of $\mathcal{I}_i$ are scale invariant, i.e. $\nabla_{\bf p} \mathcal{I}_i(\alpha {\bf q}) = \nabla_{\bf p} \mathcal{I}_i({\bf q})$ for every $\alpha, \bf q$ and $i = 1,\dots, K$. Moreover, the right-hand side in \eqref{eq:gen_alg} is a standard and feasible Interference Function.
\end{theorem}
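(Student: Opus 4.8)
The plan is to recognise the right-hand side of \eqref{eq:gen_alg} as a power-control operator $\mathcal{J}$ and to verify that it satisfies the three axioms of Definition~1 together with feasibility, after which convergence of the iteration is immediate from Theorem~2 of \cite{yates1995framework}. Set $\Phi({\bf p}) := \sum_{j=1}^K w_j \log \mathcal{I}_j({\bf p})$, which is concave by the $\log$-concavity hypothesis, and write the update as $\mathcal{J}_i({\bf p}) := \min\{\, w_i / (\nabla_{\bf p}\Phi({\bf p}))_i,\; P_\text{max}\,\}$, using $\nabla_{\bf p}\mathcal{I}_j/\mathcal{I}_j = \nabla_{\bf p}\log\mathcal{I}_j$. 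Since $\mathcal{I}_j$ is monotone, $\nabla_{\bf p}\mathcal{I}_j \geq 0$ componentwise, so $(\nabla_{\bf p}\Phi)_i \geq 0$ and $\mathcal{J}_i({\bf p}) \in (0, P_\text{max}]$; this already gives positivity, and feasibility follows by evaluating at ${\bf p} = (P_\text{max},\dots,P_\text{max})$, where $\mathcal{J}_i({\bf p}) \leq P_\text{max} = {\bf p}_i$.

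For scalability I would use the two remaining hypotheses together. Scale invariance of the gradients gives $(\nabla_{\bf p}\mathcal{I}_j(\alpha{\bf p}))_i = (\nabla_{\bf p}\mathcal{I}_j({\bf p}))_i$, so the numerators in $(\nabla_{\bf p}\Phi)_i$ are unchanged by the scaling ${\bf p}\mapsto\alpha{\bf p}$, while scalability of each $\mathcal{I}_j$ gives $\mathcal{I}_j(\alpha{\bf p}) < \alpha\mathcal{I}_j({\bf p})$ for $\alpha>1$, so each term of $(\nabla_{\bf p}\Phi(\alpha{\bf p}))_i$ is at least $\alpha^{-1}$ times the corresponding term of $(\nabla_{\bf p}\Phi({\bf p}))_i$, strictly so for at least one $j$ (using the mild non-degeneracy that some $\mathcal{I}_j$ genuinely depends on ${\bf p}_i$). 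Hence $w_i/(\nabla_{\bf p}\Phi(\alpha{\bf p}))_i < \alpha\, w_i/(\nabla_{\bf p}\Phi({\bf p}))_i$, and a short case split --- on whether $w_i/(\nabla_{\bf p}\Phi({\bf p}))_i$ lies below or above $P_\text{max}$ --- carries this strict inequality through the outer $\min$, giving $\mathcal{J}_i(\alpha{\bf p}) < \alpha\mathcal{J}_i({\bf p})$.

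The crux, and the step I expect to be the real obstacle, is monotonicity: since $x\mapsto\min\{w_i/x,\,P_\text{max}\}$ is nonincreasing, it suffices to show that $(\nabla_{\bf p}\Phi)_i = \sum_j w_j\, (\nabla_{\bf p}\mathcal{I}_j)_i/\mathcal{I}_j$ is nonincreasing in ${\bf p}$. Monotonicity of each $\mathcal{I}_j$ controls the denominators, but the numerators $(\nabla_{\bf p}\mathcal{I}_j({\bf p}))_i$ are not ordered by concavity alone. Here I would exploit scale invariance: it forces $\mathcal{I}_j$ to be affine along every ray, equivalently $\mathcal{I}_j({\bf p}) = \mathcal{I}_j({\bf 0}) + {\bf p}^{T}\nabla_{\bf p}\mathcal{I}_j({\bf p})$ with $\nabla_{\bf p}\mathcal{I}_j$ homogeneous of degree zero, whence the Euler relation $\sum_\ell {\bf p}_\ell\, \partial_\ell\partial_i\mathcal{I}_j = 0$; combining this with the Hessian of $\log\mathcal{I}_j$ being negative semidefinite should yield that $(\nabla_{\bf p}\log\mathcal{I}_j)_i$ is antitone in every coordinate. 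I expect the honest treatment to need either this Hessian/Euler bookkeeping or a decomposition of a general increase ${\bf p}'\le{\bf p}$ into a coordinatewise step and a radial step, along which $\nabla_{\bf p}\log\mathcal{I}_j$ is visibly antitone since the numerator is ray-constant and the denominator increases. For the affine interference function \eqref{eq:aff_int} this step is immediate, the numerators being constant, which is the case that matters in practice.

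Once $\mathcal{J}$ has been shown to be a standard and feasible interference function, Theorem~2 of \cite{yates1995framework} applies verbatim to the iteration ${\bf p}^{(k+1)} = \mathcal{J}({\bf p}^{(k)})$, giving a unique fixed point and global convergence to it from any initialisation, which is the assertion of Theorem~\ref{thm:main}.
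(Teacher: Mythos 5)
Most of your verification runs parallel to the paper's: positivity and scalability are handled with essentially the same manipulations (scale-invariant numerators, $\mathcal{I}_j(\alpha\mathbf{p})<\alpha\mathcal{I}_j(\mathbf{p})$, and pushing the strict inequality through the $\min\{\cdot,P_\text{max}\}$ cap), and the final appeal to Yates' fixed-point theorem is identical. Your feasibility argument is simpler than the paper's: you evaluate the capped map at $\mathbf{p}=P_\text{max}\mathbf{1}$, whereas the paper invokes convergence of the CCCP iterations \eqref{eq:grads} (Theorem 2 of \cite{yuille2001concave}) to produce a point $\tilde{\mathbf{p}}\ge\tilde{\mathcal{I}}(\tilde{\mathbf{p}})$; since the theorem concerns the right-hand side of \eqref{eq:gen_alg} including the cap, your shortcut is legitimate.

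The genuine gap is exactly where you predicted it: monotonicity. You leave that step as a sketch (``should yield'', ``I expect''), and the route you sketch does not go through. Scale invariance of the gradient together with log-concavity does not make $(\nabla_{\bf p}\log\mathcal{I}_j)_i$ antitone in every coordinate: take $\mathcal{I}(\mathbf{p})=\sigma+\sqrt{\mathbf{p}_1\mathbf{p}_2}$, which is positive, monotone, scalable, concave (hence log-concave) and has a gradient homogeneous of degree zero, yet $(\nabla_{\bf p}\mathcal{I}(\mathbf{p}))_1/\mathcal{I}(\mathbf{p})=\tfrac12\sqrt{\mathbf{p}_2/\mathbf{p}_1}\,/\,(\sigma+\sqrt{\mathbf{p}_1\mathbf{p}_2})$ is strictly increasing in $\mathbf{p}_2$. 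So the Euler relation plus negative semidefiniteness of $\nabla^2\log\mathcal{I}_j$ cannot deliver coordinatewise antitonicity, and the coordinatewise part of your proposed radial/coordinatewise decomposition stalls for the same reason. The paper closes this step with Lemma \ref{lemma:mon_dec}, which asserts $\nabla f(\mathbf{x})/f(\mathbf{x})\le\nabla f(\mathbf{x}')/f(\mathbf{x}')$ for $\mathbf{x}\ge\mathbf{x}'$ for any nonnegative log-concave $f$, deduced from the monotone-operator inequality $(\nabla\log f(\mathbf{x})-\nabla\log f(\mathbf{x}'))^T(\mathbf{x}-\mathbf{x}')\le0$; your instinct that an inner-product inequality does not yield a componentwise one is precisely the delicate point, and the example above shows that lemma needs more than its stated hypotheses (for the affine function \eqref{eq:aff_int} the antitonicity is immediate, the numerators being constant). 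As written, therefore, your proposal establishes positivity, scalability and feasibility but not monotonicity, and without monotonicity Yates' theorem cannot be invoked; to complete it you would need either to prove the coordinatewise antitonicity of $\nabla_{\bf p}\mathcal{I}_j/\mathcal{I}_j$ under additional structural assumptions or to restrict to interference functions (such as \eqref{eq:aff_int}) for which it can be verified directly.
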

\begin{proof}
See appendix.
\end{proof}

To justify the assumptions in the latter result, we investigate two examples. The following functions are meeting the assumptions in Theorem \ref{thm:main}.\\
\textbf{Examples:}
\begin{itemize}
    \item[a)] Consider the affine linear interference function, given in \eqref{eq:aff_int}. As the logarithm of an affine linear function is also $\log$-concave and the gradient of $\mathcal{I}_i$ only consists of the channel gains $\bf G_{ij}$ both conditions are fulfilled. Moreover the algorithm derived by using \eqref{eq:aff_int} in \eqref{eq:gen_alg} was also derived in \cite{tan2012fast, dahrouj2012power}. Here, convergence was also shown by using Yates interference function framework. 
\item[b)]It is known \cite{papandriopoulos2005optimal}, that the following function is an interference function
\begin{align}
    \mathcal{I}_i({\bf p}) = \sigma_i + \sum_{j\neq i} {\bf p}_i\log(1 +\frac{{\bf G}_{ij}{\bf p}_j}{{\bf p}_i})\label{eq:rayleigh}.
\end{align}
This interference function also fulfills the conditions in Theorem \ref{thm:main}. First of all \eqref{eq:rayleigh} is a non-negative concave \cite{tan2015optimal} function and thus also $\log$-concave. The gradient of \eqref{eq:rayleigh} is scale invariant, since
\begin{align*}
    \left(\nabla_{\bf p}\mathcal{I}_i({\bf p})\right)_i & = \sum_{j\neq i}\log(1 +\frac{{\bf G}_{ij}{\bf p}_j}{{\bf p}_i}) - \frac{{\bf G}_{ij}{\bf p}_j}{{\bf G}_{ij}{\bf p}_j + {\bf p}_i} \\
    & = \sum_{j\neq i}\log(1 +\frac{{\bf G}_{ij}{\alpha\bf p}_j}{\alpha{\bf p}_i}) - \frac{{\bf G}_{ij}\alpha{\bf p}_j}{{\bf G}_{ij}\alpha{\bf p}_j +\alpha {\bf p}_i} \\&= \left(\nabla_{\bf p}\mathcal{I}_i(\alpha{\bf p})\right)_i . 
\end{align*}
For $k\neq i$ it holds that
\begin{align*}
    \left(\nabla_{\bf p}\mathcal{I}_i({\bf p})\right)_j &= \frac{{\bf G}_{ij}{\bf p}_i}{{\bf G}_{ij}{\bf p}_j + {\bf p}_i}\\&=\frac{{\bf G}_{ij}\alpha{\bf p}_i}{{\bf G}_{ij}\alpha{\bf p}_j +\alpha {\bf p}_i} \\& = \left(\nabla_{\bf p}\mathcal{I}_i(\alpha{\bf p})\right)_j.
\end{align*}
\end{itemize}

\subsection{Weighted Sum Rate Maximization}
In Section \ref{sec:special_case}, we derived a closed-form update rule that ensures convergence to a stationary point for the specific case of \eqref{eq:sr}. Building on this foundation, the subsequent section proposes a primal-dual algorithm based on the DCA methodology. By introducing an auxiliary variable $\mathbf{q}$ into the DCA sub-problems, a similar update structure for $\mathbf{p}$. This update is directly connected to the previous section.

Using again the properties of the logarithm, the objective can be rewritten as
\begin{align}
\sum_i^K w_i\log (\mathbf{G}_{ii}{\bf p }_i + \mathcal{I}_i({\bf p })) - \sum_i^K w_i \log\left(\mathcal{I}_i({\bf p })\right).
\end{align}
The associated DCA is given by
\begin{subequations}\label{eq:dc_for_sr}
\begin{align}
    {\bf p}^{(k + 1)}\in&\arg\max_{\bf p} \sum_i^K w_i\log (\mathbf{G}_{ii}{\bf p }_i + \mathcal{I}_i({\bf p })) \\& - \sum_i^K w_i \log\left(\mathcal{I}_i({\bf p }^{(k)})\right) \\& -\nabla_{{\bf p}} \left(\sum_i^K w_i\log\left(\mathcal{I}_i({\bf p}^{(k)})\right)\right)^T\left({\bf p} - {\bf p}^{(k)}\right).
\end{align}    
\end{subequations}

Moreover, under the assumption that the interference function is {$\log$-concave} the algorithm given in \eqref{eq:dc_for_sr} converges to a stationary point of \eqref{eq:sr}. 

Introduce $\bf q$ and the corresponding constraint as the following
\begin{subequations}\label{eq:PDA_formulation}
    \begin{align}
    \max_{{\bf p, q},\, \text{s.t. } \bf p = q} &\sum_i^K w_i\log (\mathbf{G}_{ii}{\bf p }_i + \mathcal{I}_i({\bf q })) \\&- \sum_i^K w_i \left( \log\left(\mathcal{I}_i({\bf p }^{(k)})\right)\right) \\& - \sum_i^K w_i\left(\nabla_{\bf p}\log\left(\mathcal{I}_i({\bf p }^{(k)})\right)^T({\bf p} - {\bf p}^{(k)})\right).
\end{align}
\end{subequations}
Problem \eqref{eq:PDA_formulation} is equivalent to \eqref{eq:dc_for_sr} and is convex in $\left({\bf p,q}\right)$. The Lagrangian is given as
\begin{subequations}\label{eq:lagrangian}
\begin{align}
    \mathcal{L}&\left({\bf p,q,\mathbf{\lambda}}\right) = \sum_i^K w_i\log (\mathbf{G}_{ii}{\bf p }_i + \mathcal{I}_i({\bf q }))   \\ &- \sum_i^K w_i \left( \log\left(\mathcal{I}_i({\bf p }^{(k)})\right) +\frac{\nabla_{\bf p}\mathcal{I}_i({\bf p }^{(k)})^T}{\mathcal{I}_i({\bf p }^{(k)})}({\bf p} - {\bf p}^{(k)})\right) \\ & - \boldsymbol{\lambda}^T({\bf p - q}) .
\end{align}
\end{subequations}
Thus the convex sub-problems in the DCA \eqref{eq:dc_for_sr} can be solved by the following Primal Dual Algorithm, 
\begin{subequations}
\begin{align}
    {\bf p}^{(l + 1)} &\in \arg\max
    \mathcal{L}\left({{\bf p},{\bf q}^{(l)},\mathbf{\lambda}^{(l)}}\right)\label{eq:update_p}\\
    {\bf q}^{(l + 1)} &\in \arg\max
    \mathcal{L}\left({{\bf p}^{(l+1)},{\bf q},\mathbf{\lambda}^{(l)}}\right)\label{eq:update_q}\\
    \boldsymbol{\lambda}^{(l+1)} &= \boldsymbol{\lambda}^{(l)} + \alpha\left({\bf p}^{(l+1)} -{\bf q}^{(l+1)}\right). \label{eq:update_lam}    
\end{align}
\end{subequations}

The advantage of the choice of $\bf q$ lies in the gradients, as it allows to derive a closed-form update rule for $\mathbf{p}$, which closely resembles a structure with which we are already familiar with, i.e.
\begin{align*}
    \nabla_{\bf p} \mathcal{L}\left({\bf p,q, \boldsymbol{\lambda}}\right) & = \sum_i^K w_i\frac{\mathbf{G}_{ii}{\bf e }_i }{\mathbf{G}_{ii}{\bf p }_i + \mathcal{I}_i({\bf q })} \\&-\sum_i^K w_i\frac{\nabla_{\bf p}\mathcal{I}_i({\bf p }^{(k)})}{\mathcal{I}_i({\bf p }^{(k)})} - \boldsymbol{\lambda} = {\bf 0}.
\end{align*}
It is again possible to isolate ${\bf p}$ as it only appears in the first term, analogously to the previous section, i.e. 
\begin{align}
     {\bf p}_i  &=w_i\left[\sum_j^K \frac{w_j\nabla_{\bf p}\mathcal{I}_j({\bf p }^{(k)})_i}{\mathcal{I}_j({\bf p }^{(k)})} + \boldsymbol{\lambda}_i \right]^{(-1)} - {\bf q}_i\gamma({\bf q})_i^{-1}\label{eq:sr_structure}
\end{align}
where $\gamma_i({\bf p}) = \frac{{\bf G}_{ii}{\bf p}_i}{\mathcal{I}_i({\bf p})}$. The right hand side is again a standard and feasible interference function under the assumptions in Theorem 1, if $\boldsymbol{\lambda},\bf q$ are constant.

Furthermore, as $\gamma \rightarrow \infty$, the objective in \eqref{eq:sr} can be approximated by \eqref{eq:WS_log}. Consequently, since $\gamma^{-1} \rightarrow 0$, \eqref{eq:sr_structure} converges to the fixed-point iteration described in Section \ref{sec:special_case}. 

The update \eqref{eq:update_p} can be replaced by \eqref{eq:sr_structure} and constitutes an Interference Function if $\boldsymbol{\lambda}$ and $\mathbf{q}$ are fixed, as proven in Theorem \ref{thm:main}. The problem in \eqref{eq:update_q} is concave, and \eqref{eq:update_lam} represents a gradient update.

Note that this approach considers two stages of iterations; the outer iterations based on the DCA and the inner iterations solve the convex sub-problems. Moreover, for each inner iteration another optimization problem has to be solved. Due to this computational complexity, a deep unfolding based Primal Dual Algorithm is proposed in the next Section.

\section{Deep Unfolding}
Deep unfolding interprets the iterations of an iterative algorithm as layers of a neural network whose parameters are optimized through training, using a stochastic gradient descent to minimize a loss function. The number of iterations is predetermined and are executed sequentially as stages of a multi-layer computation. During training, specific variables in each layer are replaced by trainable variables. Moreover, we propose to replace the update of $\bf q$, i.e. \eqref{eq:update_q}, by a Fully Connected Neural Network (FCNN). 
\subsection{Learned Primal Dual Algorithm}
Based on the previous discussion, a Learned Primal Dual Algorithm (LPDA) is proposed to approximate a solution to \eqref{eq:sr} for an arbitrary standard and feasible $\log$-concave interference function. See Algorithm \ref{alg:PDA}. By updating ${\bf q}^{(k + 1)}$ with the help of a FCNN $\Phi\left({\bf p}^{(k+1)}, {\bf G}; \Theta\right)$, we can skip the inner loop and solve the convex problem \eqref{eq:update_q}. The FCNN is designed as follows: The input $({\bf p, G})$ is reshaped to a dimension of $K(K+1)$ and used directly as input. The number of neurons in each layers decrease with the number of layers. The tanh activation function is used in the hidden layers, while sigmoid activation is used in the output layer, as the target vector $\bf q$ is scaled to the range $[0,1]$.
\begin{algorithm}
\textbf{Input: }$\mathbf{G}, \bf w$. Initialize ${\bf p}_i = 1, \forall i=1,\dots,K$ \\
 \For{$k = 0,\dots, N-1$}{
    ${\bf p}_i^{(k+1)} = \min\biggl\lbrace w_i\left[\sum_j^K \frac{w_j\nabla_p\mathcal{I}_j({\bf p }^{(k)})_i}{\mathcal{I}_j({\bf p }^{(k)})} + \boldsymbol{\lambda}_i ^{(k)}\right]^{(-1)} -  {\bf q}^{(k)}_i\gamma({\bf q}^{(k)})_i^{-1}, P_{\max}\biggr\rbrace$\\
    ${\bf q}^{(k+1)} = {\bf \Phi}\left({\bf p}^{(k+1)}, {\bf G}; \Theta\right)$\\
    $\boldsymbol{\lambda}^{(k+1)} = \boldsymbol{\lambda}^{(k)} + \alpha^{(k)}\left({\bf p}^{(k+1)} -{\bf q}^{(k+1)}\right)$
 }
\caption{Learned Primal Dual Algorithm}\label{alg:PDA}
\end{algorithm}
\vspace{-5mm}
\subsection{Training}
The trainable variables of the FCNN are given by $\Theta$. Furthermore, while training algorithm \ref{alg:PDA} the step-sizes $\alpha^{(k)}$ are added to the set of trainable parameters. The Primal Dual Algorithm is then trained unsupervised and end-to-end for a fixed number of iterations $N$. The loss function is given as the objective in \eqref{eq:sr}, i.e. $ \ell\left({\bf p}^{(N)}\left(\mathbf{G}, \tilde{\Theta}\right), \mathbf{w}\right) = -\sum_{i=1}^Kw_iR_i\left({\bf p}^{(N)}\left(\mathbf{G}, \tilde{\Theta}\right)\right),$ where ${\bf p}^{(N)}$ is the ouput of Algorithm \ref{alg:PDA} and $\tilde{\Theta} = \Theta \cup \bigcup_{i = 0}^{N-1}\lbrace\alpha^{(k)}\rbrace$ is the set of all trainable variables. Training is carried out with standard algorithms, specifically the ADAM optimizer \cite{kingma2014adam}, with a decreasing learning rate.
\section{Numerical Experiments}\label{sec:NumExp}
For the following numerical experiment, the distance-dependent path loss model outlined in ITU-1411 is considered. $K$ transmitters, denoted as $\text{Tx}_i$, are uniformly distributed within a square area of length $500$m. The corresponding receivers, $\text{Rx}_i$, are positioned within a disk centered around each transmitter, with a distance uniformly chosen between $d_\text{min}$ to $d_\text{max}$. The respective system parameters can be found in Table \ref{tab:system_params}. For simplicity the same transmit power for each transmitter is used, i.e. $P_i = P$ for all $i$. The Primal Dual Algorithm is trained with a fixed number of $N=8$ iterations, the FCNN has $7$ layers, the number of neurons per hidden layer are $\lbrace 154, 132, 110, 88, 66, 44\rbrace$. In each training iteration $n_\text{train}$-random D2D-networks are generated as well as random uniform weights, thus matrix $\bf G$ and weight vector $\bf w$ are computed. Interference Function \eqref{eq:aff_int} is used.
\begin{table}[]
    \centering
    \begin{tabular}{c|c}
        Bandwidth & $20$ MHz \\ \hline
        Carrier frequency & $2.4$ GHz\\ \hline
        Antenna height & $1.5$ m\\ \hline
        Transmit power level & $20$ dBm\\ \hline
        Background noise level & $-174$ dBm/Hz \\ \hline
        $d_\text{min} \text{ \raisebox{-0.9ex}{\~{}} } d_\text{max}$ & $2 \text{ m}\text{ \raisebox{-0.9ex}{\~{}} } 65$ m 
    \end{tabular}
    \caption{System Parameters for ITU-1411 short-range outdoor model used in the following numerical experiments.
    \vspace{-5mm}}
    \label{tab:system_params}
\end{table}

The performance of the proposed LPDA is investigated with respect to FPLinQ \cite{shen_fplinq_2017, shen_fractional_2018, shen_fractional_2018-1}. 
The following performance metric is used
\begin{align}
\mathbb{E}_{\mathbf{G}, \mathbf{w}}\left[\frac{\sum_{i=1}^K w_{i}R_i\left(\hat{{\bf p}}, \mathbf{G}\right)}{\sum_{i=1}^K w_{i}R_i\left(\hat{{\bf p}}_{\text{FP}}, \mathbf{G}\right) }\right], \label{eq:performance}
\end{align}
where $\hat{{\bf p}}$ is the output of the LPDA with $N$ iterations and $\hat{{\bf p}}_{\text{FP}}$ is the output of FPLinQ after $100$ iterations.
Results can be seen in Table \ref{tab:PDA_vanilla} and Figure \ref{fig:weighted_true_obj} for $500$ unseen D2D-networks, each with $K=10$ links. Table \ref{tab:PDA_vanilla} shows that the proposed LPDA, trained with random uniform weights, is able to achieve the same performance as FPLinQ. Performance is less than $100\%$ in the case of weights equal to one, since these were not seen during training. Moreover, Figure \ref{fig:weighted_true_obj} presents the mean WSR over iterations of the benchmark as well as LPDA. The figure shows faster convergence compared to FPLinQ.

\begin{table}[h!]
    \centering
    \begin{tabular}{c|c|c}
    $n_\text{train}$ & $w_i\text{ \raisebox{-0.9ex}{\~{}} }\mathcal{U}_{[0,1]}$& $\mathbf{w}=\mathbf{1}$ \\ \hline
    $500$ & $\mathbf{101.2}\%$ & $97.65\%$  \\ \hline
    $1000$ & $\mathbf{101.4}\%$ & $97.87\%$ 
    
\end{tabular}

    \caption{Performance \eqref{eq:performance} of LPDA with $N=8$ and trained with random uniform weights for each D2D network.
    \vspace{-5mm}}
    \label{tab:PDA_vanilla}
\end{table}

\begin{figure}
\centering
\includegraphics[width=.9\linewidth]{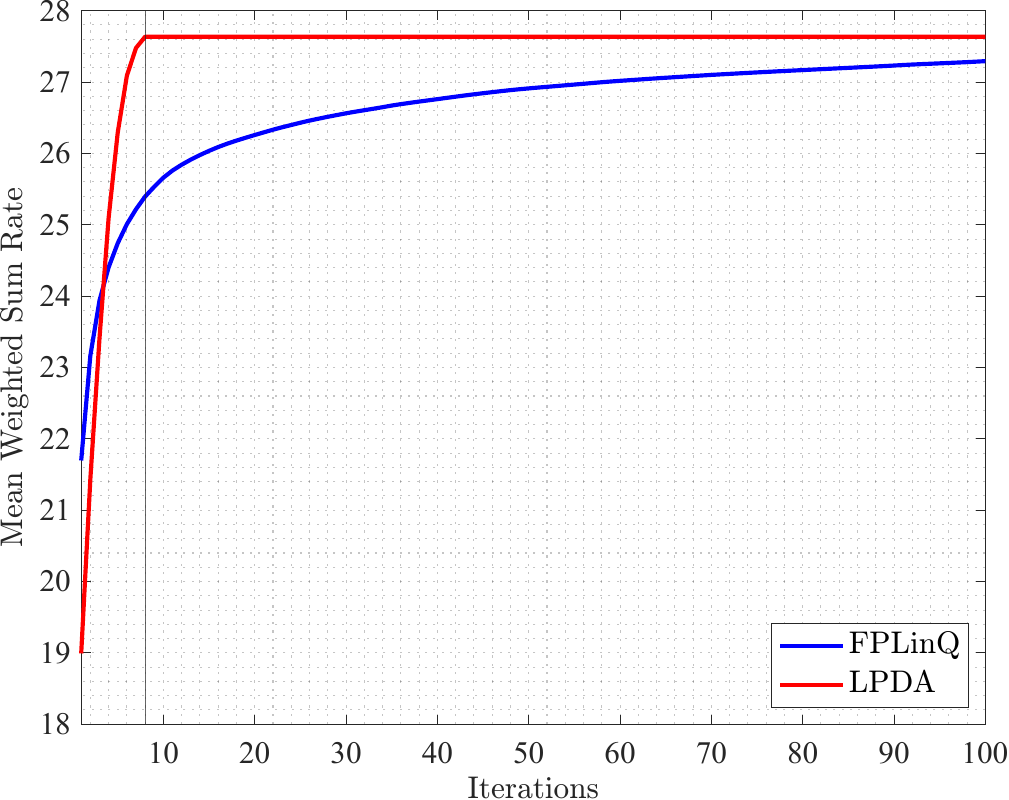}
\caption{Mean WSR versus iterations of the benchmark FPLinQ and the trained proposed LPDA \ref{alg:PDA} is evaluated across $500$ unseen D2D-networks with $K=10$ users and random uniform weights. LPDA is trained with $n_\text{train}=1000$, using random uniform weights and $N=8$ iterations, as indicated by the vertical black line in the figure.
\vspace{-5mm}}
\label{fig:weighted_true_obj}
\end{figure}
\section{Conclusion}
By combining the interference function framework by Yates with the difference-of-convex function programming framework, we developed a novel approach to address the WSR maximization problem for arbitrary interference functions. Specifically, we established theoretical guarantees for our approach when the respective interference function is $\log$-concave. Furthermore, we discovered that in a particular instance of the WSR problem, the resulting algorithm is a standard power control algorithm as defined by Yates. Finally, we proposed a learned PDA algorithm that, through a particular choice of auxiliary variable, reveals a connection to the previously described algorithm. In numerical examples, the competitiveness w.r.t. the benchmark of the proposed learned algorithm is demonstrated.

{\footnotesize
  \bibliographystyle{abbrv}
  \bibliography{bib}
}

\onecolumn

\section*{Appendix}
\subsection{Proof of Theorem 1}
As a first step to the main result a general property of $\log$-concave functions is stated.
\begin{lemma}\label{lemma:mon_dec}
    Let $f:\Omega \mapsto \mathbb{R}$ be a non-negative and $\log$-concave function, where $\Omega \subset \mathbb{R}_+^K$ a convex subset, then for ${\bf x} \geq {\bf x'}$
    \begin{align*}
        \frac{\nabla_{\bf x} f({\bf x}) }{f({\bf x})} \leq\frac{\nabla_{\bf x} f({\bf x'}) }{f({\bf x'})}.
    \end{align*}. 
\end{lemma}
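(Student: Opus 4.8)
The statement asserts that the vector field ${\bf x}\mapsto \nabla\log f({\bf x}) = \nabla f({\bf x})/f({\bf x})$ is nonincreasing in every coordinate on $\Omega$. The plan is to reduce this to a pointwise Hessian sign condition and then use log-concavity. Concretely, for ${\bf x}\ge{\bf x'}$ I would join ${\bf x'}$ to ${\bf x}$ by a path that raises one coordinate at a time, express the net change of each component $\partial_i\log f$ along that path as an integral of the second-order partials $\partial_k\partial_i\log f$, and thereby reduce the claimed vector inequality to showing that \emph{every} entry of the Hessian $H({\bf x}):=\nabla^2\log f({\bf x})$ is $\le 0$ throughout $\Omega$. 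Concavity of $\log f$ by itself only gives the aggregate inequality $\langle \nabla\log f({\bf x})-\nabla\log f({\bf x'}),\,{\bf x}-{\bf x'}\rangle\le 0$; the entrywise statement in the lemma is strictly stronger, which is why the Hessian sign is the right target.

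For the diagonal entries this is immediate: restricting $\log f$ to a line parallel to $e_i$ yields a concave function of one real variable, so $H_{ii}=\partial_i^2\log f\le 0$. For the off-diagonal entries I would use the identity
\[
\nabla^2\log f \;=\; \frac{1}{f}\,\nabla^2 f \;-\; \frac{1}{f^2}\,(\nabla f)(\nabla f)^T .
\]
The rank-one correction $-\tfrac{1}{f^2}(\nabla f)(\nabla f)^T$ is entrywise $\le 0$ as soon as $\nabla f\ge 0$ (true for any $f$ that is monotone in the sense of the Standard Interference Function axioms), so everything hinges on controlling the term $\tfrac1f\,\partial_i\partial_j f$.

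The main obstacle is exactly this last term: plain log-concavity does not force the mixed partials of $f$ (or of $\log f$) to be nonpositive, so extra structure must be brought in. For the two interference functions used in the paper the issue evaporates — for the affine-linear model $\nabla^2 f\equiv 0$, so $\nabla^2\log f=-\tfrac1{f^2}(\nabla f)(\nabla f)^T\le 0$ entrywise and the lemma holds outright, and for the function in Example b) one verifies $\partial_i\partial_j f\le 0$ by direct differentiation. For a generic standard, feasible, log-concave $f$ I would attempt to derive $\partial_i\partial_j f\le 0$, or at least $\tfrac1f\,\partial_i\partial_j f\le \tfrac1{f^2}\,\partial_i f\,\partial_j f$ (which together with the rank-one term closes the argument), from monotonicity and scalability ($\alpha f({\bf p})>f(\alpha{\bf p})$ for $\alpha>1$); pinning down the precise hypothesis on $f$ under which the entrywise inequality is valid is, I expect, the crux.
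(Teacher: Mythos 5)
Your proposal is not a complete proof, and you say so yourself: the reduction of the componentwise claim to entrywise nonpositivity of $\nabla^2\log f$ is left unresolved for a generic non-negative log-concave $f$, so as a proof of the lemma as stated there is a gap. However, your diagnosis of where the difficulty lies is exactly right, and it is instructive to compare it with the paper's own argument. The paper derives only the aggregate inequality $\left(\nabla f({\bf x})/f({\bf x})-\nabla f({\bf x}')/f({\bf x}')\right)^T({\bf x}-{\bf x}')\le 0$ from negative semidefiniteness of $\nabla^2\log f$, and then passes directly (``thus especially for ${\bf x}\ge{\bf x}'$'') to the componentwise conclusion. That is precisely the jump you refuse to make, and it is not valid in general: a gradient difference $(1,-2)$ paired with a direction $(1,1)\ge 0$ has nonpositive inner product while its first component is positive. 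Concretely, $\log f({\bf x})=-(x_1-x_2)^2+M(x_1+x_2)$ on a box in $\mathbb{R}_+^2$ (with $M$ large so that $f>0$ is increasing) is log-concave, yet for ${\bf x}=(1,2)\ge{\bf x}'=(1,1)$ the first component of $\nabla\log f$ increases from $M$ to $M+2$. Log-concavity alone only gives monotonicity of each $\partial_i\log f$ along its own coordinate direction (take ${\bf x}={\bf x}'+t{\bf e}_i$ in the aggregate inequality), not along arbitrary ${\bf x}\ge{\bf x}'$; some hypothesis beyond non-negativity and log-concavity is needed for the lemma as stated.

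What your route buys is the correct sufficient condition: if every entry of $\nabla^2\log f$ is nonpositive, i.e. $f\,\partial_i\partial_j f\le\partial_i f\,\partial_j f$ for all $i,j$, then integrating along a coordinatewise path from ${\bf x}'$ to ${\bf x}$ yields the componentwise inequality, and you verify this condition for both interference functions actually used in the paper (the affine model, where $\nabla^2 f\equiv 0$, and the function in \eqref{eq:rayleigh}). So your attempt, while incomplete as a proof of Lemma \ref{lemma:mon_dec} in its stated generality, identifies the strengthening under which the lemma, and hence the monotonicity step in the proof of Theorem \ref{thm:main}, is actually justified; the paper's proof as written needs this (or an equivalent assumption) to go through for general ${\bf x}\ge{\bf x}'$.
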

\begin{proof}
    Since $f$ is non-negative and $\log$-concave $\nabla^2_{\bf x} \log(f({\bf x}))$ is negative semi-definite. Thus, we have the follwing inequality
    \begin{align*}
        \left(\frac{\nabla_{\bf x} f({\bf x}) }{f({\bf x})} - \frac{\nabla_{\bf x} f({\bf x'}) }{f({\bf x'})}\right)^T\left({\bf x}- {\bf x'}\right)\leq 0\, \forall {\bf x, x'}\in \Omega.
    \end{align*}
    Thus especially for ${\bf x} \geq {\bf x'}$
    \begin{align*}
        \frac{\nabla_{\bf x} f({\bf x}) }{f({\bf x})} \leq\frac{\nabla_{\bf x} f({\bf x'}) }{f({\bf x'})}.
    \end{align*}
\end{proof}

With this, the main result can be proven.
\begin{proof}
    We show that the following function
    \begin{align}
        \tilde{\mathcal{I}}_i({\bf p}) = w_i \left[\sum_j^K\frac{w_j\nabla \mathcal{I}_j({\bf p})_i}{\mathcal{I}_j({\bf p})}\right]^{-1} \label{eq:int_iog}
    \end{align}
    is a standard and feasible interference function. \\
    \textbf{Feasibility}:\\
    Since we assume, that $\mathcal{I}_j(\cdot)$ is $\log$-concave for every $i$ it is known from the CCCP framework, that \eqref{eq:grads} converges to an optimum or saddle-point of \eqref{eq:WS_log}, see Theorem 2 \cite{yuille2001concave}. This means there exists some $\tilde{\bf p}$ s.t. 
    \begin{align*}
        \tilde{\bf p}_i=\min\left\lbrace \tilde{\mathcal{I}}_i(\tilde{\bf p}) , P_\text{max} \right\rbrace
    \end{align*}
    and therefore
    \begin{align*}
        \tilde{\bf p}\geq \tilde{\mathcal{I}}(\tilde{\bf p}).
    \end{align*}\\
    \noindent\textbf{Standard:}\\
    \textbf{Positivity:} Since $\mathcal{I}({\bf p}) $ is assumed to be a standard interference function it holds that $\nabla \mathcal{I}_j({\bf p})  >0$ and $\mathcal{I}({\bf p}) > 0$, therefore $\tilde{\mathcal{I}}({\bf p}) > 0$.\\
    \textbf{Monotonicity:} Since $\log(\mathcal{I}_j)$ is $\log$-concave we know, with Lemma \ref{lemma:mon_dec}, for $\bf p\geq p'$ 
        \begin{align}
            \frac{w_j\nabla \mathcal{I}_j({\bf p})_i}{\mathcal{I}_j({\bf p})} & \leq \frac{w_j\nabla \mathcal{I}_j({\bf p}')_i}{\mathcal{I}_j({\bf p}')}\\
            \iff \sum_j^K\frac{w_j\nabla \mathcal{I}_j({\bf p})_i}{\mathcal{I}_j({\bf p})} &\leq \sum_j^K\frac{w_j\nabla \mathcal{I}_j({\bf p}')_i}{\mathcal{I}_j({\bf p}')} \\
            \iff w_i\left[\sum_j^K\frac{w_j\nabla \mathcal{I}_j({\bf p})_i}{\mathcal{I}_j({\bf p})}\right]^{-1} &\geq w_i\left[\sum_j^K\frac{w_j\nabla \mathcal{I}_j({\bf p}')_i}{\mathcal{I}_j({\bf p}')} \right]^{-1}\\
            \iff \tilde{\mathcal{I}}_i({\bf p}) & \geq \tilde{\mathcal{I}}_i({\bf p}') 
        \end{align}
        \textbf{Scalability:} Let $\alpha > 1$. Consider the following inequality, since the gradients are scale-invariant and $\mathcal{I}$ is a standard interference function we have,
        \begin{align*}
            \frac{1}{\alpha}\frac{\nabla_{\bf p} \mathcal{I}_j({\bf p})_i \mathcal{I}_j(\alpha{\bf p})}{\nabla_{\bf p} \mathcal{I}_j(\alpha{\bf p})_i\mathcal{I}_j({\bf p})} = \frac{1}{\alpha}\frac{\mathcal{I}_j(\alpha{\bf p})}{\mathcal{I}_j({\bf p})} < 1.
        \end{align*}
        Therefore
        \begin{align*}
            \frac{1}{\alpha}\frac{\nabla_{\bf p} \mathcal{I}_j({\bf p})_i}{\mathcal{I}_j({\bf p})} &< \frac{\nabla_{\bf p} \mathcal{I}_j(\alpha{\bf p})_i}{\mathcal{I}_j(\alpha{\bf p})} \\
            \iff \frac{1}{\alpha}\sum_j^K\frac{w_j\nabla_{\bf p} \mathcal{I}_j({\bf p})_i}{\mathcal{I}_j({\bf p})} &< \sum_j^K\frac{w_j\nabla_{\bf p} \mathcal{I}_j(\alpha{\bf p})_i}{\mathcal{I}_j(\alpha{\bf p})} \\
            \iff \alpha w_i\left[\sum_j^K\frac{\nabla_{\bf p} \mathcal{I}_j({\bf p})_i}{\mathcal{I}_j({\bf p})}\right]^{-1} &> w_i\left[\sum_j^K\frac{\nabla_{\bf p} \mathcal{I}_j(\alpha{\bf p})_i}{\mathcal{I}_j(\alpha{\bf p})} \right]^{-1}\\
            \iff \alpha \tilde{\mathcal{I}}_i({\bf p}) &> \tilde{\mathcal{I}}_i(\alpha{\bf p})
        \end{align*}
    Therefore $\tilde{\mathcal{I}}$ is a standard interference function and thus $\min\lbrace\tilde{\mathcal{I}}, P_\text{max}\rbrace$. The statement follows then by Theorem 2 and 4 in \cite{yates1995framework}.
\end{proof}
\end{document}